\newtheorem{thm}{Theorem}
\newtheorem{lemma}{Lemma}
\newcommand{\tr}{\mathrm{tr}}
\begin{document}

\begin{center}
\raggedright\LARGE\bf Multipartite separability of density matrices of graphs 
\end{center}

\begin{center}
 \raggedright Hui Zhao$^1$,  Jing-Yun Zhao$^1$, and Naihuan Jing $^{2,3 *}$
 \end{center}

\footnotetext{\hspace*{-.45cm}\footnotesize $^{*}$ Corresponding author. E-mail: jing@ncsu.edu \\
$^1$ College of Applied Sciences, Beijing University of Technology, Beijing 100124, China\\
\small $^2$ Department of Mathematics, North Carolina State University, Raleigh, NC 27695, USA \\
\small $^3$ Department of Mathematics, Shanghai University, Shanghai 200444, China
}

\vspace*{2mm}
\begin{center}
\begin{minipage}{17cm}
\parindent 20pt\footnotesize
\noindent {\bf Abstract} A new layers method is presented for multipartite separability of density matrices from simple graphs. Full separability of tripartite states is studied for graphs on degree symmetric premise. The models are generalized to multipartite systems by presenting a class of fully separable states arising from partially symmetric graphs.\\
\end{minipage}
\end{center}
\begin{center}
\begin{minipage}{17cm}
\begin{minipage}[t]{2.3cm}{\bf Keywords:}\end{minipage}
\begin{minipage}[t]{17cm}
\noindent Laplacian Matrices, Partially symmetric, Degree symmetric, Separability
\end{minipage}\par\vglue8pt
{\bf PACS: }\noindent 03.65.Ud, 02.10.Ox, 03.67.Mn
\end{minipage}
\end{center}

\section{Introduction}

 Quantum entanglement is one of the most fascinating features of quantum theory and has numerous applications in quantum information processing, secure communication and channel protocols [1,2,3]. The theory of graphs, a well-developed mathematical area, has been intensively used
  in network systems, optimization, and other fields [4].

In [5] the notion of the density matrix of a graph was introduced and it was shown that mixed states correspond to the graphical property of
uniform mixture, and the Laplacian matrices have been studied in terms of entanglement properties such as von Neumann entropy and concurrence.
Corresponding to simple graphs, the graph states are introduced as a family of multipartite quantum states [6], and their nice
entanglement structures have been extensively used in quantum computations. For instance, new algorithm based on
graph states [7] was given and showed improvement in comparison with exploiting the physics of optically active multi-level nano structures [8].
Graph theoretic methods have also been developed to analyze maximally entangled pure states distributed between a number of different parties [9].
Recently, theoretical principle of representing the quantum state and local unitary graph was established in [10]. Conditions for separability of generalized Laplacian matrices of weighted graphs with unit trace were given in [11]. Further results on the multipartite separability of Laplacian matrices of graphs were provided in [12,13]. In [14] the authors characterized the set of graphs whose separability are invariant under graph isomorphisms. Two classes of generalized graph product states were also constructed in [15].
These have provided an alternative interesting graph theoretic approach to separability and several well-known criteria have been formulated in the new method. For instance, it was proved that the degree criterion is equivalent to the PPT-criterion [16]. And a degree condition to test separability of density matrices of graphs was described in [17]. It was further shown that the well-known matrix realignment criterion can be used to test separability for a class of quantum states (cf. [18]).

On the other hand, Dutta et. al. [19] introduced the concept of partially symmetric graphs and degree symmetric graphs for bipartite quantum states. They presented some outstanding results for bipartite quantum states. As mentioned in [19] the simple assignment of direction to particles as ``vertical'' and ``horizontal'' will no longer be possible for three or more particles using the original layers method. We will develop a new method to solve this problem and generalize the separable results to multipartite systems.

The current work aims to study multipartite separability of density matrices from simple graphs. Using the graph-theoretic method, we are able to distinguish entangled and separable Laplacian matrices as well as generalizing the results to mulipartite systems. The paper is organized roughly as follows. Basic information on graphs is given in Section 2. In Section 3, we study separability of tripartite states and multipartite states defined by simple graphs by introducing layers in graph $G$ and the properties of degree symmetric graphs. In Section 4, the relationship between partially symmetric and degree symmetric is given. We then prove that a class of partially symmetric graphs is fully separable using the new layer method. Conclusions are given in Section 5.\\

\section{The Laplacian Matrices of Graphs} \label{The bound}
We begin by recalling some basic notions for graphs. Let $G=(V(G),E(G))$ be a graph with vertex set $V(G)$
and edge set
$E(G)\subset\{(i,j):i, j\in V(G)\}$. A loop is an edge of the form $(i,i)$. We are only concerned with simple graphs, i.e, graphs without loops and multiple edges. Suppose $G$ has $n$ vertices, i.e. $|V(G)|=n$. The  adjacency matrix $A(G)$ of the graph $G$ is the $n\times n$ matrix with $(i,j)$-th entry defined by
\begin{equation}
[A(G)]_{i,j}=\left\{
             \begin{array}{lr}
             1,\ \mathrm{if}\ (i,j)\in E(G);  \\
             0, \ \mathrm{otherwise}.
             \end{array}
\right.
\label{A}
\end{equation}
The degree $d_G(v_i)$ of vertex $v_i \in V(G)$ is the number of edges adjacent to $v_i$. The degree matrix $D(G)$ is the diagonal $n\times n$-matrix with diagonal entries $d_G(v_i)$.

The Laplacian matrix $L(G)$ and normalized Laplacian matrix $Q(G)$ of the graph $G$ are defined respectively by
\begin{align}
L(G)=D(G)-A(G), \qquad Q(G)=D(G)+A(G).
\end{align}

Recall that the density matrix of a finite dimensional quantum mechanical system $H_n$ is a Hermitian operator $\rho$ acting on $H_n$ that is positive semidefinite with unit trace. Its combinatorial counterpart is defined as follows.

{\bf \noindent Definition 1}.
The density matrix of graph $G$ is defined as the matrix
\begin{align}
\rho_l(G)=\frac{L(G)}{\tr(L(G))}, \qquad \mbox{or}\qquad \rho_q(G)=\frac{Q(G)}{\tr(Q(G))},
\end{align}
where $L(G)$ (resp. $Q(G)$) are the combinatorial (resp. normalized) Laplacian matrix of the graph $G$.

We need to recall the concept of separability to present our results.

{\bf \noindent Definition 2}[20].
A bipartition of the system $S$=$\{1,2,\cdots,n $\} is a pair $(A,\bar{A})$, with $ 1\leq n_A\leq n_{\bar{A}}$, where $A\subset S$, $\bar{A}=S\setminus A\ (i.e.\ S=A+\bar{A})$ and $n_A=|A|$, the cardinality of $A$.

{\bf \noindent Definition 3}.
A tripartite state is separable with respect to the bipartition $A|BC$ if it can be written as
\begin{eqnarray}
\rho_{A\mid BC}=\sum_kq_k|\phi^k_A\rangle\langle\phi^k_A|\otimes|\varphi^k_{BC}\rangle\langle\varphi^k_{BC}|,
\label{B}
\end{eqnarray}
where $q_k$ is a probability distribution, i.e. $\sum_kq_k=1$ and $q_k\geq0$ for all $k$.

Other bipartitions $B|AC$, $C|AB$ are defined similarly, and the notion can be generalized to the multipartite quantum systems.

{\bf \noindent Definition 4}. A combinatorial state $\rho$ is called biseparable if it can be written as
\begin{eqnarray}
\rho=\sum p_{A|\bar{A}}\ \rho^{sep}_{A|\bar{A}},
\label{C}
\end{eqnarray}
where the sum runs through all possible bipartitions $A|\bar{A}$ and $p_{A|\bar{A}}$ is
a probability distribution. Here $\bar{A}$ is the complement of the
subset $A$ of the vertex set $\{1,2,\cdots,n $\}.

{\bf \noindent Definition 5}.
A density matrix $\rho$ is fully separable in $H_1\otimes H_2\otimes\cdots\otimes H_n$ if it can be written as
\begin{eqnarray}
\rho=\sum_i q_i\ \rho^1_i\otimes\rho^2_i\cdots\otimes\rho^n_i,
\label{D}
\end{eqnarray}
where $q_i$ is a probability distribution and
$\rho^j_i$ are density matrices in the sbusystem $H_j(j=1,2,\cdots,n)$.

\section{Degree Symmetric Graphs and Separability }

In this section, we study separability of tripartite states of dimension $m\times n\times q$ which arise from simple graphs of $mnq$ vertices acting on $H_A\otimes H_B\otimes H_C$. These results are then generalized to multipartite quantum systems.

Let us start with the notion of {\it layers} in graph $G$. Suppose the vertex set $V(G)$ of the graph $G$ has $mnq$ vertices labelled by
integers $1,2,\cdots,mnq$. We partition $V(G)$ into $m$ subsets $C_1,C_2,\cdots,C_m$ called layers such that each layer consists of $nq$ vertices.
Write
$$C_i=\{v_{i,1,1},\cdots,v_{i,1,q},\cdots, v_{i,n,1},\cdots,v_{i,n,q}\}.$$
Thus the matrix $A(G)$ is partitioned into blocks as follows.
  \begin{eqnarray}
A(G)=
\left[ \begin{array}{ccccc}
           A_1,_1&A_1,_2&\cdots&A_1,_{m-1}&A_1,_m   \\
           A_2,_1&A_2,_2&\cdots&A_2,_{m-1}&A_2,_m     \\
           \vdots&\vdots&\ddots&\vdots&\vdots   \\
           A_{m-1},_1&A_{m-1},_2&\cdots&A_{m-1},_{m-1}&A_{m-1},_m     \\
           A_m,_1&A_m,_2&\cdots&A_m,_{m-1}&A_m,_m    \\
           \end{array}
      \right ],
\label{F}
\end{eqnarray}
  where $A_{i,k}\ (i,k=1,2,\cdots,m)$ are matrices of size $nq$ representing edges between $C_i$ and $C_k$. Each layer $C_i$ can be partitioned
  further into $n$ layers $C_{i,1}, C_{i,2},\cdots, C_{i,j},\cdots, C_{i,n}$
  with $q$ vertices each. Write $C_{i,j}$=$\{v_{i,j,1}, v_{i,j,2},\cdots,v_{i,j,q} $\} where\ $v_{i,j,k}=nq(i-1)+q(j-1)+k$. Therefore
  $A_{i,k}$ is written as a block matrix as follows.
  \begin{eqnarray}
A_{i,k}=
\left[ \begin{array}{ccccc}
           A_{i1},_{k1}&A_{i1},_{k2}&\cdots&A_{i1},_{kn}   \\
           A_{i2},_{k1}&A_{i2},_{k2}&\cdots&A_{i2},_{kn}   \\
           \vdots&\vdots&\ddots&\vdots&   \\
           A_{in},_{k1}&A_{in},_{k2}&\cdots&A_{in},_{kn}    \\
           \end{array}
      \right ],
\label{G}
\end{eqnarray}
  where $A_{ij,kl}$ are the (sub)-adjacency matrices of size $q$ representing edges between $C_{i,j}$ and $C_{k,l}$. As the
   adjacency matrix is symmetric, we have that
   $A_{ij,kl}^T$=$A_{kl,ij}$ $(i\neq k,j\neq l)$ and $A_{ij,ij}$=$A_{ij,ij}^T$. Putting these together, $A(G)$ is written into the following form
\begin{eqnarray}
A(G)=
\left[ \begin{array}{cccccccc}
           A_{11},_{11}&A_{11},_{12}&\cdots&A_{11},_{1n}&\cdots&A_{11},_{m1}&\cdots&A_{11},_{mn}   \\
           A_{12},_{11}&A_{12},_{12}&\cdots&A_{12},_{1n}&\cdots&A_{12},_{m1}&\cdots&A_{12},_{mn}   \\
           \vdots&\vdots&\ddots&\vdots&\ddots&\vdots&\ddots&\vdots   \\
           A_{1n},_{11}&A_{1n},_{12}&\cdots&A_{1n},_{1n}&\cdots&A_{1n},_{m1}&\cdots&A_{1n},_{mn}   \\
            \vdots&\vdots&\ddots&\vdots&\ddots&\vdots&\ddots&\vdots   \\
           A_{m1},_{11}&A_{m1},_{12}&\cdots&A_{m1},_{1n}&\cdots&A_{m1},_{m1}&\cdots&A_{m1},_{mn}   \\
           \vdots&\vdots&\ddots&\vdots&\ddots&\vdots&\ddots&\vdots   \\
           A_{mn},_{11}&A_{mn},_{12}&\cdots&A_{mn},_{1n}&\cdots&A_{mn},_{m1}&\cdots&A_{mn},_{mn}   \\
           \end{array}
      \right ],
\label{H}
\end{eqnarray}

Partially transposed graph of a bipartitie quantum state was defined in [19]. We generalize the notion to tripartite systems.

{\bf \noindent Definition 6}.
A graph theoretical partial transpose $(GTPT)$ is an operation on the tripartite graph $G$ by
replacing all existing edges $(v_{i,j,k},v_{s,u,v}), i\neq s$ by $(v_{s,j,k},v_{i,u,v})$ and keeping all other edges unchanged.

$GTPT$ generates a new simple graph $G'=(V(G'),E(G'))$, where $V(G')=V(G)$ with the same labelling. Similarly one can
define other $GTPT$ by replacing $(v_{i,j,k},v_{s,u,v}), j\neq u$ by $(v_{i,u,k},v_{s,j,v})$ or $(v_{i,j,k},v_{s,u,v}), k\neq v$ by $(v_{i,u,v},v_{s,j,k})$. In this paper, we only consider partial transpose in the sense of Definition 6, separability criteria in the other two cases are easily obtained similar method.

Note that
\begin{equation}
A(G)^{T_A}=\left\{
             \begin{array}{lr}
             1,\ (nq(s-1)+q(j-1)+k, nq(i-1)+q(u-1)+v)\in E(G');  \\
             0, \ otherwise.
             \end{array}
\right.
\label{I}
\end{equation}
where $T_A$ denotes the partial transpose corresponding to subsystem $A$. We have $A^{T_A}(G)=A(G')$.

Next we define degree symmetric graphs as follows.

{\bf \noindent Definition 7}. A graph $G$ is called degree symmetric if $d_G(u)=d_{G'}(u)$ for $u\in V(G)=V(G')$.

\begin{thm}
Suppose the graph $G$ is degree symmetric under GTPT, then
full separability of $\rho_l(G)$ implies full separability of $\rho_l(G')$, and full separability of $\rho_q(G)$ implies full separability of $\rho_q(G')$.
\label{1}
\end{thm}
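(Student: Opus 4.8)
The plan is to translate full separability of a graph-density matrix into a purely combinatorial statement and then show the GTPT operation preserves it when the graph is degree symmetric. First I would recall the structural fact (used in [11,12,19]) that for a graph $G$ on $mnq$ vertices the density matrix $\rho_l(G)$ is fully separable if and only if there is a decomposition of $L(G)$ as a convex combination of tensor products of (sub)-Laplacian-type positive matrices; concretely, full separability of $\rho_l(G)$ forces the off-diagonal block structure of $A(G)$ to factor through the layer partition $C_1,\dots,C_m$ in a controlled way, and the diagonal degree matrix $D(G)$ splits compatibly. So the first step is to write down, from the assumed separable decomposition $\rho_l(G)=\sum_i q_i\,\rho_i^1\otimes\rho_i^2\otimes\rho_i^3$, what constraints this places on the blocks $A_{ij,kl}$ and on the degrees $d_G(v_{i,j,k})$.

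Second, I would analyze the effect of GTPT on these data. By Definition 6 and equation (\ref{I}), GTPT acts on $A(G)$ by the partial-transpose map $T_A$ on the first tensor factor, i.e. $A(G')=A^{T_A}(G)$, which on block form (\ref{H}) simply swaps the layer indices $i\leftrightarrow s$ in the outer positions while fixing the inner $(j,k)$, $(u,v)$ data — in other words it replaces the block $A_{ij,kl}$ by (a transpose/relabelling of) $A_{kj,il}$ and leaves each diagonal contribution alone. Crucially, the degree-symmetric hypothesis $d_G(u)=d_{G'}(u)$ means $D(G')=D(G)$, so $L(G')=D(G)-A(G')=D(G)-A^{T_A}(G)=L(G)^{T_A}$ (the last equality because $D$ is diagonal, hence invariant under any partial transpose). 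Thus the theorem reduces to: if $\rho_l(G)$ is fully separable and $L(G)^{T_A}$ is again a Laplacian matrix of a simple graph (which degree symmetry guarantees via the combinatorial description of GTPT — no negative entries, correct diagonal), then $\rho_l(G)^{T_A}$ is fully separable.

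Third, I would finish by the standard observation that partial transposition on one factor sends a fully separable state to a fully separable state: if $\rho_l(G)=\sum_i q_i\,\rho_i^1\otimes\rho_i^2\otimes\rho_i^3$ then $\rho_l(G)^{T_A}=\sum_i q_i\,(\rho_i^1)^{T}\otimes\rho_i^2\otimes\rho_i^3$, and each $(\rho_i^1)^T$ is again a density matrix (a transpose of a positive unit-trace matrix is positive with unit trace). Since $\tr(L(G'))=\tr(L(G))$ by degree symmetry, $\rho_l(G')=\rho_l(G)^{T_A}$, and the displayed decomposition exhibits $\rho_l(G')$ as fully separable. The identical argument with $Q(G)=D(G)+A(G)$ in place of $L(G)$ handles the $\rho_q$ case, since again only $A$ is affected by $T_A$ and $D$ is untouched.

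The main obstacle I anticipate is not the quantum-information step (partial transpose preserving separability is essentially automatic) but the bookkeeping that makes it legitimate: one must verify carefully that $A^{T_A}(G)$ really is the adjacency matrix of the simple graph $G'$ produced by Definition 6 — i.e. that the index relabelling in (\ref{I}) matches the block swap, that no loops or multiple edges are created, and that the degree-symmetry hypothesis is exactly what is needed to ensure $D(G')=D(G)$ so that $\rho_l(G')$ is literally equal to $\rho_l(G)^{T_A}$ rather than merely similar to it. Getting that identification precise — especially tracking the transposes $A_{ij,kl}^T=A_{kl,ij}$ through the layer indexing $v_{i,j,k}=nq(i-1)+q(j-1)+k$ — is where the real care is required, and I would devote the bulk of the written proof to it.
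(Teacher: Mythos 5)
Your proposal is correct and follows essentially the same route as the paper's proof: degree symmetry gives $D(G')=D(G)$, hence $\rho_l(G')=\rho_l(G)^{T_A}$ (and likewise for $\rho_q$), and the partial transpose of a fully separable decomposition $\sum_i q_i\,\rho_i^1\otimes\rho_i^2\otimes\rho_i^3$ is again fully separable since each $(\rho_i^1)^T$ is a density matrix. The ``structural fact'' invoked in your first paragraph is unnecessary (and never actually used), but the core argument coincides with the paper's.
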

\begin{proof} Let $G'$ be the GTPT of $G$ with respect to subsystem $A$. If $G$ is degree symmetric, then $D(G')=D(G)$ and $A(G)^{T_A}=A(G')$,
where $T_A$ is the partial transpose.
Assume that $\rho_l(G)$ is fully separable, then $\rho_l(G)$ can be written as
$$\rho_l(G)=\sum_i p_i\ \rho^{A}_i\otimes \rho^{B}_i\otimes \rho^{C}_i.$$
Subequently $\rho_l(G)^{T_A}=\sum_i p_i\ (\rho^{A}_i)^{T_A}\otimes \rho^{B}_i\otimes \rho^{C}_i$ is positive semidefinite. It follows from $D(G')=D(G)$ and $A(G')=A(G)^{T_A}$ that $\rho_l(G')=\rho_l(G)^{T_A}$. Therefore $\rho_l(G')$ is fully separable.
\end{proof}

The above results can be generalized to multipartite quantum systems. Let $H_i$ be an $N_i-$dimensional complex Hilbert space, and $\rho$ a density matrix defined on $H_1\otimes H_2\otimes\cdots\otimes H_n$. The layers of graph $G$ are defined as follows.

Let $G$ be a graph of $N_1N_2\cdots N_n$ vertices. Label $V(G)$ by integers $1,2,\cdots,N_1N_2\cdots N_n$ and partition $V(G)$ into $N_1$ layers {$C_1, \cdots,C_{i_1},\cdots,C_{N_1}$}
 with $N_2N_3\cdots N_n$ vertices in each layer. Write $C_{i_1}$=$\{v_{i_1,1,\ldots,1},\cdots,v_{i_1,1,\ldots,N_n},\cdots,v_{i_1,N_2,N_3,\cdots, N_n}$\} for $1\leq i_1\leq N_1$. This actually partitions $A(G)$ into blocks as follows.
 \begin{eqnarray}
A(G)=
\left[ \begin{array}{ccccc}
           A_1,_1&A_1,_2&\cdots&A_1,_{{N_1}-1}&A_1,_{N_1}   \\
           A_2,_1&A_2,_2&\cdots&A_2,_{{N_1}-1}&A_2,_{N_1}     \\
           \vdots&\vdots&\ddots&\vdots&\vdots   \\
           A_{{N_1}-1},_1&A_{{N_1}-1},_2&\cdots&A_{{N_1}-1},_{{N_1}-1}&A_{{N_1}-1},_{N_1}     \\
           A_{N_1},_1&A_{N_1},_2&\cdots&A_{N_1},_{{N_1}-1}&A_{N_1},_{N_1}    \\
           \end{array}
      \right ],
\label{J}
\end{eqnarray}
where $A_{i_1,j_1} (i_1,j_1=1,2,\cdots,N_1)$ are matrices of size $N_2N_3\cdots N_n$ representing edges between $C_{i_1}$ and $C_{j_1}$. Next,
each layer $C_{i_1}$ can be partitioned further into $N_2$ layers
$$C_{i_1,1},\cdots,C_{i_1,i_2},\cdots,C_{i_1,N_2}$$
with $N_3N_4\cdots N_n$ elements each, where $C_{i_1,i_2}$=$\{v_{i_1,i_2,1,\ldots, 1},\cdots,v_{i_1,i_2,N_3,N_4,\cdots, N_n}$\}. This will partition $A_{i_1,j_1}$ into blocks as follows.

\begin{eqnarray}
A_{i_1,j_1}=
\left[ \begin{array}{ccccc}
           A_{i_11},_{j_11}&A_{i_11},_{j_12}&\cdots&A_{i_11},_{j_1N_2}   \\
           A_{i_12},_{j_11}&A_{i_12},_{j_12}&\cdots&A_{i_12},_{j_1N_2}  \\
           \vdots&\vdots&\ddots&\vdots&   \\
          A_{i_1N_2},_{j_11}&A_{i_1N_2},_{j_12}&\cdots&A_{i_1N_2},_{j_1N_2}    \\
           \end{array}
      \right ],
\label{K}
\end{eqnarray}
where $A_{i_1i_2,j_1j_2}(i_1,j_1=1,2,\cdots,N_1,\ i_2,j_2=1,2,\cdots,N_2\ )$
are matrices of size $N_3N_4N_5\cdots N_n$ representing edges between $C_{i_1,i_2}$ and $C_{j_1,j_2}$. In this way, we continue partitioning $G$  until each layer has $N_n$ vertices. The $(n-1)th$-layer is represented by $$C_{{i_1},{i_2},\cdots,{i_{n-1}}}=\{v_{{i_1},{i_2},{\cdots},{i_{n-1}},{1}},\ v_{{i_1},{i_2},{\cdots},{i_{n-1}},{2}},\\\cdots\cdots,
v_{{i_1},{i_2},{\cdots},{i_{n-1}},{N_n}}\}, $$
where $i_1=1,2,\cdots,N_1;\ i_2=1,2,\cdots,N_2;\ \cdots\cdots;\ i_{n-1}=1,2,\cdots,N_{n-1}$ and $v_{i_1},_{i_2},_{\cdots},_{i_{n-1}},_{i_n}=(i_1-1)N_2N_3\cdots N_n+(i_2-1)N_3N_4\cdots N_n+\cdots +(i_{n-1}-1)N_n+i_n$.
The corresponding sub-adjacency matrix is
\begin{eqnarray}
A_{i_1i_2\cdots i_{n-2},j_1j_2\cdots j_{n-2}}=  \nonumber
\label{L}
\end{eqnarray}
\begin{eqnarray}
\left[ \begin{array}{ccccc}
           A_{i_1i_2\cdots i_{n-2}1},_{j_1j_2\cdots j_{n-2}1}&A_{i_1i_2\cdots i_{n-2}1},_{j_1j_2\cdots j_{n-2}2}&\cdots&A_{i_1i_2\cdots i_{n-2}1},_{j_1j_2\cdots j_{n-2}N_{n-1}}   \\
           A_{i_1i_2\cdots i_{n-2}2},_{j_1j_2\cdots j_{n-2}1}&A_{i_1i_2\cdots i_{n-2}2},_{j_1j_2\cdots j_{n-2}2}&\cdots&A_{i_1i_2\cdots i_{n-2}2},_{j_1j_2\cdots j_{n-2}N_{n-1}}  \\
           \vdots&\vdots&\ddots&\vdots&   \\
          A_{i_1i_2\cdots i_{n-2}N_{n-1}},_{j_1j_2\cdots j_{n-2}1}&A_{i_1i_2\cdots i_{n-2}N_{n-1}},_{j_1j_2\cdots j_{n-2}2}&\cdots&A_{i_1i_2\cdots i_{n-2}N_{n-1}},_{j_1j_2\cdots j_{n-2}N_{n-1}}    \\
           \end{array}
      \right ],
\label{M}
\end{eqnarray}
where $A_{{i_1}{i_2}{\cdots}{i_{n-1}},{j_1}{j_2}{\cdots}{j_{n-1}}}$ are matrices of size $N_n$ representing edges between $C_{{i_1},{i_2},{\cdots},{i_{n-1}}}$ and $C_{{j_1},{j_2},{\cdots},{j_{n-1}}}$. Note that $A^{T}_{{i_1}{i_2}{\cdots}{i_{n-1}},{j_1}{j_2}{\cdots}{j_{n-1}}}
=A_{{j_1}{j_2}{\cdots}{j_{n-1}},{i_1}{i_2}{\cdots}{i_{n-1}}}$ (there exists $k\in$ $\{1,2,\cdots,n-1 $\} such that $i_k\ne j_k$) and $A^{T}_{{i_1}{i_2}{\cdots}{i_{n-1}},{i_1}{i_2}{\cdots}{i_{n-1}}}
=A_{{i_1}{i_2}{\cdots}{i_{n-1}},{i_1}{i_2}{\cdots}{i_{n-1}}}$,
because the adjacency matrix is symmetric.

We now introduce the concept of $GTPT$ in multipartite systems.

{\bf \noindent Definition 8}.
 A graph theoretical partial transpose $(GTPT)$ on graph $G$ is an operation of $G$ replacing all existing edges $(v_{{i_1},{i_2},{\cdots},{i_n}},v_{{j_1},{j_2},{\cdots},{j_n}})i_1\neq j_1$ by $(v_{{j_1},{i_2},{\cdots},{i_n}},v_{{i_1},{j_2},{\cdots},{j_n}})$ and keeping all other edges unchanged.

This gives rise to a new simple graph $G'=(V(G'),E(G'))$ from $G=(V(G),E(G))$, where $V(G')=V(G)$ with the same labelling.
We have
\begin{equation}
A(G)^{T_A}=\left\{
             \begin{array}{lr}
             1,\ ((j_1-1)N_2\cdots N_n +(i_2-1)N_3\cdots N_n+\cdots+(i_{n-1}-1)N_n+i_n, \\(i_1-1)N_2\cdots N_n +(j_2-1)N_3\cdots N_n+\cdots+(j_{n-1}-1)N_n+j_n)\in E(G');  \\
             0, \ otherwise.
             \end{array}
\right.
\label{N}
\end{equation}
Then $A(G')=A(G)^{T_A}$ and $\mid E(G')\mid=\mid E(G)\mid$.

We can similarly define degree symmetry in multipartite systems.

{\bf \noindent Definition 9}. A graph $G$ under a GTPT is called degree symmetric if $d_G(u)=d_G(u')$ for $u\in V(G)=V(G')$.

Using the method similar to Theorems 1, 
we have

\begin{thm} If graph $G$ under a GTPT is degree symmetric then full separability of $\rho_l(G)$ implies full separability of $\rho_l(G')$.\label{3}
\end{thm}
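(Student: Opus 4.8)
The plan is to mimic the proof of Theorem~\ref{1} verbatim, the only new ingredient being bookkeeping for the $n$-fold tensor structure rather than the triple tensor product. Let $G'$ be the GTPT of $G$ with respect to the first subsystem $A=H_1$. The hypothesis of degree symmetry gives $d_G(u)=d_{G'}(u)$ for every vertex $u$, hence $D(G')=D(G)$; and by construction (equation~(\ref{N})) we have $A(G')=A(G)^{T_A}$, where $T_A$ is the partial transpose on the first tensor factor of $H_1\otimes H_2\otimes\cdots\otimes H_n$. Consequently
\begin{eqnarray}
L(G')=D(G')-A(G')=D(G)-A(G)^{T_A}=\bigl(D(G)-A(G)\bigr)^{T_A}=L(G)^{T_A}, \nonumber
\end{eqnarray}
where the third equality uses that $D(G)$ is diagonal and therefore invariant under any partial transpose. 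Dividing by the trace (which is preserved by transposition, so $\tr L(G')=\tr L(G)$) yields $\rho_l(G')=\rho_l(G)^{T_A}$.

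Next I would invoke full separability of $\rho_l(G)$: write
\begin{eqnarray}
\rho_l(G)=\sum_i q_i\,\rho^1_i\otimes\rho^2_i\otimes\cdots\otimes\rho^n_i \nonumber
\end{eqnarray}
with $q_i$ a probability distribution and each $\rho^j_i$ a density matrix on $H_j$. Applying the partial transpose $T_A$ on the first factor commutes with the tensor sum, so
\begin{eqnarray}
\rho_l(G)^{T_A}=\sum_i q_i\,(\rho^1_i)^{T}\otimes\rho^2_i\otimes\cdots\otimes\rho^n_i. \nonumber
\end{eqnarray}
Since the transpose of a density matrix is again a density matrix (it is positive semidefinite with unit trace), each summand is a product of density matrices, and the right-hand side exhibits $\rho_l(G')=\rho_l(G)^{T_A}$ as a convex combination of product states. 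Hence $\rho_l(G')$ is fully separable, which is the claim. The same argument applied to $Q(G)=D(G)+A(G)$ shows the analogous statement for $\rho_q$, should one want it, since again only the diagonal-invariance of $D(G)$ and equation~(\ref{N}) are used.

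I do not anticipate a genuine obstacle: the statement is essentially Theorem~\ref{1} re-indexed for $n$ parties, and the only thing to be careful about is that the GTPT in Definition~8 really does implement the partial transpose on the $H_1$ factor under the chosen vertex labelling $v_{i_1,\ldots,i_n}=(i_1-1)N_2\cdots N_n+\cdots+i_n$ — this is precisely what equation~(\ref{N}) records, so it may be cited rather than reproved. The one point worth stating explicitly is that $D(G)^{T_A}=D(G)$, i.e.\ a diagonal matrix is fixed by partial transposition; everything else is the routine observation that partial transpose is linear and maps the product state $\rho^1_i\otimes\cdots\otimes\rho^n_i$ to $(\rho^1_i)^T\otimes\rho^2_i\otimes\cdots\otimes\rho^n_i$, still a valid fully separable decomposition.
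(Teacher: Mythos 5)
Your proposal is correct and follows essentially the same route as the paper, which proves this theorem by the identical argument used for Theorem~\ref{1}: degree symmetry gives $D(G')=D(G)$, the GTPT gives $A(G')=A(G)^{T_A}$, hence $\rho_l(G')=\rho_l(G)^{T_A}$, and the partial transpose of a fully separable state is again fully separable. Your added remarks (diagonal invariance of $D(G)$ under $T_A$ and trace preservation) merely make explicit what the paper leaves implicit.
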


\section{ Partially Symmetric Graphs and Separability }

In this section, we study partially symmetric graphs and the relationship between partially symmetric and degree symmetric. We will prove
that a class of partially symmetric graphs are fully separable.

We begin with partial symmetry in tripartite combinatorial quantum systems.

{\bf \noindent Definition 10}. A graph $G$ is partially symmetric if $(v_{i,j,k},v_{s,u,v}) \in E(G)$ then $(v_{s,j,k},v_{i,u,v})\\ \in E(G)$ for all $i\ne s$ and $j,k,s,u,v$.

Note that if $G$ is partially symmetric, then its adjacency matrix satisfies that $A^T_{ij},_{kl}=A_{ij},_{kl}$ for all $i,j,k, l$.

If the graph $G$ is also under GTPT, then one can talk about degree symmetry and partial symmetry together. The following result explains their relationship.

\begin{lemma} Every partially symmetric graph $G$ is degree symmetric.
\label{(1)}
\end{lemma}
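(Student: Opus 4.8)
The plan is to show that the graph-theoretic partial transpose leaves the degree of every vertex unchanged when $G$ is partially symmetric, so that $D(G')=D(G)$, which is exactly the defining condition of degree symmetry in Definition 7. The key observation is that $GTPT$ acts by pairing up edges: for $i\neq s$ it sends the edge $(v_{i,j,k},v_{s,u,v})$ to $(v_{s,j,k},v_{i,u,v})$, and the partial-symmetry hypothesis guarantees that whenever the first edge is present the target edge is already present in $G$ as well. Thus $GTPT$ restricted to the ``off-diagonal'' edges (those joining different layers $C_i$, $C_s$) is an involution on the edge set of $G$, while every ``diagonal'' edge (both endpoints in the same layer $C_i$, i.e. $i=s$) is fixed. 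Hence $E(G')=E(G)$ as sets, and in particular $G'=G$, so trivially $d_G(u)=d_{G'}(u)$ for all $u$.

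First I would make precise the block description: using the labelling $v_{i,j,k}=nq(i-1)+q(j-1)+k$, the adjacency matrix decomposes into blocks $A_{ij,kl}$ of size $q$, and the remark immediately following Definition 10 records that partial symmetry is equivalent to $A_{ij,kl}^{T}=A_{ij,kl}$ for all indices. I would then recall from Section 3 that $A(G)^{T_A}=A(G')$, where $T_A$ is the partial transpose on the first subsystem; concretely this swaps the block in position $(ij,kl)$ with the transpose of the block in position $(kj,il)$ (it exchanges the first tensor-leg indices $i\leftrightarrow k$ while transposing within the $q\times q$ block). When $A_{ij,kl}^{T}=A_{ij,kl}$, this operation simply permutes the blocks without altering any entries in an unrecoverable way; combined with the symmetry $A_{ij,kl}^{T}=A_{kl,ij}$ of the full adjacency matrix, one checks that the multiset of entries in each row is preserved. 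Summing the entries of row $u$ of $A(G')$ therefore gives the same value as summing the entries of row $u$ of $A(G)$, i.e. $d_{G'}(u)=d_G(u)$.

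Concretely, the degree of $v_{i,j,k}$ in $G$ is $\sum_{s,u,v}[A(G)]_{v_{i,j,k},v_{s,u,v}}=\sum_{s,u,v}[A_{ij,su}]_{k,v}$, and the degree of the same vertex in $G'$ is obtained from (\ref{I}) as $\sum_{s,u,v}[A_{sj,iu}]_{k,v}$ after reindexing. Splitting the sum into the part with $s=i$ (diagonal blocks, which are unchanged) and the part with $s\neq i$, the partial-symmetry identity $A_{sj,iu}=A_{sj,iu}^{T}$ together with the global symmetry $A_{ij,su}^{T}=A_{su,ij}$ lets me match each term $[A_{sj,iu}]_{k,v}$ in the $G'$ sum with a term $[A_{ij,su}]_{k,v}$ in the $G$ sum, so the two sums agree. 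This gives $d_G(u)=d_{G'}(u)$ for every $u\in V(G)$, which is Definition 7, completing the proof.

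I do not anticipate a genuine obstacle here; the content is essentially bookkeeping with the block indices. The one place to be careful is the index manipulation in (\ref{I}): one must verify that the $GTPT$ really does carry the block $A_{ij,su}$ to the block sitting at the position indexed by $(sj,iu)$ and that transposition inside the $q\times q$ block is harmless precisely because partial symmetry says each such block is symmetric. Once that identification is written out cleanly, the equality of degrees — and hence that every partially symmetric graph is degree symmetric — follows immediately. It is worth remarking that the argument in fact shows the stronger statement $G'=G$, of which degree symmetry is an immediate consequence.
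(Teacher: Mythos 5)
Your proposal is correct and takes essentially the same route as the paper: both arguments show that partial symmetry forces $E(G')=E(G)$ (the paper verifies the two inclusions directly from Definitions 6 and 10, you phrase it as the edge swap being an involution that maps the off-diagonal edges of $G$ onto themselves), after which $d_G(u)=d_{G'}(u)$ is immediate, and your stronger observation $G'=G$ is exactly the remark the paper makes in Section 4. The supplementary block-matrix bookkeeping is unnecessary — and the chain via $A_{sj,iu}^T=A_{sj,iu}$ together with $A_{ij,su}^T=A_{su,ij}$ does not quite reproduce the needed identity $[A_{sj,iu}]_{k,v}=[A_{ij,su}]_{k,v}$, which instead follows directly from partial symmetry — but your first, set-theoretic argument is complete on its own.
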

\begin{proof}Let $G$ be a partial symmetric graph, and graph $G'$ obtained from $G$ by GTPT. Clearly $V(G)=V(G')$. For any $v_{i,j,k}\in V(G)=V(G')$, if $(v_{i,j,k},v_{s,u,v})\in E(G)$ then
$(v_{s,j,k},v_{i,u,v})\in E(G)$ by partial symmetry of $G$.
But then $(v_{i,j,k},v_{s,u,v})\in E(G')$ by GTPT construction (applied to the second edge). On the other hand, for $(v_{i,j,k},v_{s,u,v})\in E(G')$. By construction $(v_{s,j,k},v_{i,u,v})\in E(G)$, which then
implies that $(v_{i,j,k},v_{s,u,v})\in E(G)$ by partial symmetry. Therefore $d_{G}(v_{i,j,k})=d_{G'}(v_{i,j,k})$, i.e. graph $G$ is degree symmetric.
\end{proof}
Next we provide a class of tripartite fully separable states of dimension $mnq$ arising from partial symmetric graphs.

\begin{thm}
Let $G$ be a partially symmetric graph with the following conditions:

\noindent(1) For any two vertices of any partition $C_i$, there is no edge, $(v_{i,j,k},v_{i,l,v})\notin E(G)$ for all $i,j,k,l\ and\ v$;

\noindent(2) For each layer, the following conditions hold:
\begin{itemize}
\item Either there is no edge between vertices of $C_i$ and $C_j$, or $A_{i,k}=A_{j,l}$ for all $i,j,k,l,i\ne \\ k$ and $ j\ne l$;

\item Either there is no edge between vertices of $C_{i,j}$ and $C_{s,t}$, or $A_{ij,kl}=A_{st,uv}$ for all $i,j,k,l,s,t,i\ne k\ or\ j\ne l$, and $s\ne u\ or\ t\ne v$;
\end{itemize}
\noindent(3) Degree of all vertices in a layer are the same.\\
Then $\rho_q(G)$ is fully separable.
\label{(5)}
\end{thm}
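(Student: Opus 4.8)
The plan is to construct an explicit fully separable decomposition of $\rho_q(G)=Q(G)/\tr(Q(G))$ by exploiting the three structural conditions to show that $Q(G)=D(G)+A(G)$ factors, up to normalization, as a nonnegative combination of tensor products of one-particle density matrices. First I would fix notation: write $m\times n\times q$ for the dimensions, index the standard basis of $H_A\otimes H_B\otimes H_C$ by triples $(i,j,k)$ matching the layer labels $v_{i,j,k}$, and recall that by partial symmetry together with Lemma~\ref{(1)} the graph is degree symmetric, so Theorem~\ref{(5)} is at least consistent with Theorem~\ref{1}. Condition~(1) says there are no edges inside a single layer $C_i$, so every edge of $G$ connects vertices in \emph{different} $A$-blocks; this forces $A_{i,i}=0$ for all $i$, so $A(G)$ has block structure supported off the block-diagonal in the first tensor factor.

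Next I would analyze the off-diagonal blocks using condition~(2). The first bullet says that whenever the block $A_{i,k}$ (edges between layers $C_i$ and $C_k$, $i\neq k$) is nonzero, it equals a common matrix, call it $M$, independent of which pair of distinct layers we pick; so $A_{i,k}=E_{ik}\,M$ where $E_{ik}\in\{0,1\}$ records whether any edge is present, and the $m\times m$ matrix $E=(E_{ik})$ is symmetric with zero diagonal — i.e. $E$ is itself the adjacency matrix of a simple graph on $m$ vertices. Hence $A(G)=E\otimes M$ as an operator on $\mathbb{C}^m\otimes(\mathbb{C}^n\otimes\mathbb{C}^q)$. Applying the second bullet of condition~(2) to the $n\times n$ block structure of $M$ in exactly the same way, $M=F\otimes N$ where $F$ is the adjacency matrix of a simple graph on $n$ vertices and $N$ is a common $q\times q$ symmetric $0/1$ matrix; note partial symmetry guarantees each $A_{ij,kl}$ is symmetric, consistent with $N=N^T$. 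So altogether $A(G)=E\otimes F\otimes N$. Condition~(3) — all vertices in a layer have the same degree — combined with $A_{i,i}=0$ then forces the degree of $v_{i,j,k}$ to depend only on $i$, and counting edges through the tensor structure gives $d_G(v_{i,j,k})=d_E(i)\cdot r$ where $r$ is the (constant, by the homogeneity one extracts from conditions (2)–(3)) common row sum appearing at the lower levels; I would verify that this makes $D(G)$ equal to a tensor product $D_E\otimes(\text{something})$, or more robustly, that $D(G)=\Delta\otimes I_n\otimes I_q$ for a diagonal matrix $\Delta$ after using (3) to kill the $j,k$-dependence, and then absorb the scalar so that $Q(G)=\Delta\otimes I\otimes I+E\otimes F\otimes N$.

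With $Q(G)$ in this tensor-structured form, the final step is to write it as a sum of tensor products of positive semidefinite one-particle matrices. The natural move is to diagonalize at the ``outer'' level: since $E$ and $F$ and $N$ are adjacency matrices of simple graphs, each is symmetric; but more useful is that each \emph{Laplacian-type} combination is PSD. The cleanest route is to expand $E\otimes F\otimes N$ over the edges of the $m$-vertex graph with adjacency $E$: for each edge $\{i,k\}$ we get the rank-controlled PSD piece $(|e_i\rangle\langle e_k|+|e_k\rangle\langle e_i|)\otimes F\otimes N$, and pair it with a diagonal contribution from $\Delta$ split as $\sum_{\{i,k\}\in E}(|e_i\rangle\langle e_i|+|e_k\rangle\langle e_k|)\otimes(\cdot)$; on each such two-dimensional subspace the $2\times2$ block is $\left[\begin{smallmatrix} a & 1\\ 1 & a\end{smallmatrix}\right]\otimes F\otimes N$-like, which for the normalized Laplacian $Q$ has $a=1$ and factors as $\tfrac12(|e_i\rangle+|e_k\rangle)(\langle e_i|+\langle e_k|)\otimes F\otimes N$ plus a nonnegative diagonal remainder — each term a tensor product of PSD matrices. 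One then pushes the same edge-expansion through the second and third factors for $F$ and $N$, so the whole of $Q(G)$ becomes $\sum q_i\,\rho_i^A\otimes\rho_i^B\otimes\rho_i^C$ with $q_i\ge0$; dividing by $\tr(Q(G))$ normalizes the weights. I expect the main obstacle to be exactly this last bookkeeping: arranging the diagonal (degree) part so that it is distributed among the off-diagonal edge terms \emph{and across all three factors simultaneously} while keeping every summand a genuine tensor product of PSD operators, and checking that condition~(3) supplies precisely the slack needed (this is why the statement uses $Q$, not $L$ — the $+A$ sign is what makes the relevant $2\times2$ blocks $\left[\begin{smallmatrix}1&1\\1&1\end{smallmatrix}\right]$ rather than indefinite). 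I would handle it by first proving the $q=n=1$ case (pure $E$-graph, recovering a known separability statement for $Q$ of a graph with the stated degree-regularity), then tensoring the construction up one factor at a time.
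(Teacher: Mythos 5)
Your structural reduction is fine and consistent with what the paper does implicitly: conditions (1)--(2) do give $A(G)=E\otimes F\otimes N$ with $E$ a symmetric $0/1$ matrix with zero diagonal, all nonzero sub-blocks equal, and condition (3) gives $D(G)=\Delta\otimes I_n\otimes I_q$ with $\Delta=\mathrm{diag}(d_1,\dots,d_m)$. The gap is in your final step. The summand you propose for an edge $\{i,k\}$ of $E$, namely $\tfrac12(|e_i\rangle+|e_k\rangle)(\langle e_i|+\langle e_k|)\otimes F\otimes N$ plus a diagonal remainder, is \emph{not} a sum of tensor products of positive semidefinite matrices: $F$ and $N$ are symmetric $0/1$ matrices which generically have negative eigenvalues, so $F\otimes N$ is not PSD, and a PSD first factor tensored with a non-PSD $F\otimes N$ certifies nothing. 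Your remark that ``$Q$ has $a=1$'' also misreads the convention: in this paper $Q(G)=D(G)+A(G)$ is the signless Laplacian, so the relevant $2\times2$ diagonal entries are the layer degrees $d_i,d_k$, not $1$. Most importantly, the diagonal slack available to repair the second and third factors sits entirely in the first factor ($\Delta\otimes I\otimes I$), so ``pushing the same edge-expansion through $F$ and $N$'' is precisely the unproved point, and it is where condition (3) must enter quantitatively; as written, the proposal defers the whole content of the theorem to this acknowledged ``bookkeeping.''

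The paper closes exactly this gap by diagonalizing the inner factors instead of edge-expanding them: write the common $q\times q$ block as $N=\sum_{r_1}\lambda_{r_1}u_{r_1}u_{r_1}^T$, so the third factor becomes a rank-one projector $u_{r_1}u_{r_1}^T$ (PSD for free) and the eigenvalue $\lambda_{r_1}$, whatever its sign, is pushed outward; repeating at the middle level with $H^{(\lambda_{r_1})}=\sum_{r_2}\lambda_{r_2}u_{r_2}u_{r_2}^T$ yields $Q(G)=\sum_{r_1,r_2}B^{(2)}\otimes u_{r_2}u_{r_2}^T\otimes u_{r_1}u_{r_1}^T$, where $B^{(2)}$ is the $m\times m$ matrix with diagonal $d_1,\dots,d_m$ and off-diagonal entries $0$ or $\lambda_{r_2}$. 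The bounds $|\lambda_{r_1}|\le\|N\|_\infty$ and $|\lambda_{r_2}|\le\|H^{(\lambda_{r_1})}\|_\infty$, combined with condition (3) (each $d_i$ equals the corresponding row sum of the adjacency part), show $B^{(2)}$ is diagonally dominant with positive diagonal, hence PSD; normalizing gives the explicit fully separable decomposition. If you try to salvage your edge-expansion you will find you must first diagonalize $F$ and $N$ (or prove an equivalent diagonal-dominance estimate), at which point you have reproduced the paper's argument; so the missing idea is the spectral peeling of the inner tensor factors together with the $\|\cdot\|_\infty$/degree comparison, not merely bookkeeping.
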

\begin{proof} As $G$ is partial symmetric, there is no edge between two vertices of any partition $C_i$. The normalized Laplacian reads that
\begin{eqnarray}
Q(G)=
\left[ \begin{array}{ccccccccccc}
          d_1I_q&0&\cdots&0&A_{11},_{21}&\cdots&A_{11},_{2n}&\cdots&A_{11},_{m1}&\cdots&A_{11},_{mn}   \\
          0&d_1I_q&\cdots&0&A_{12},_{21}&\cdots&A_{12},_{2n}&\cdots&A_{12},_{m1}&\cdots&A_{12},_{mn}   \\
          \vdots&\vdots&\ddots&\vdots&\vdots&\ddots&\vdots&\ddots&\vdots&\ddots&\vdots   \\
          0&0&\cdots&d_1I_q&A_{1n},_{21}&\cdots&A_{1n},_{2n}&\cdots&A_{1n},_{m1}&\cdots&A_{1n},_{mn}   \\
          A_{11},_{21}&A_{12},_{21}&\cdots&A_{1n},_{21}&d_2I_q&\cdots&0&\cdots&A_{21},_{m1}&\cdots&A_{21},_{mn}\\
            \vdots&\vdots&\ddots&\vdots&\vdots&\ddots&\vdots&\ddots&\vdots&\ddots&\vdots   \\
           A_{11},_{2n}&A_{12},_{2n}&\cdots&A_{1n},_{2n}&0&\cdots&d_2I_q&\cdots&A_{2n},_{m1}&\cdots&A_{2n},_{mn}\\
           \vdots&\vdots&\ddots&\vdots&\vdots&\ddots&\vdots&\ddots&\vdots&\ddots&\vdots   \\
           A_{11},_{m1}&A_{12},_{m1}&\cdots&A_{1n},_{m1}&A_{21},_{m1}&\cdots&A_{2n},_{m1}&\cdots&d_mI_q&\cdots&0\\
           \vdots&\vdots&\ddots&\vdots&\vdots&\ddots&\vdots&\ddots&\vdots&\ddots&\vdots   \\
           A_{11},_{mn}&A_{12},_{mn}&\cdots&A_{1n},_{mn}&A_{21},_{mn}&\cdots&A_{2n},_{mn}&\cdots&0&\cdots&d_mI_q\\
           \end{array}
      \right ],
\label{O}
\end{eqnarray}
where $I_q$ is the identity matrix of size $q$.

Write the symmetric (subadjacency) matrix $A_{zj,kl}=(a_{z,j,u},_{k,l,v})_{q\times q}$. Let $A_{zj,kl}=\sum_{r_1}\lambda_{r_1}u_{r_1}u^T_{r_1}$ be the spectral decomposition, where $u_{r_1}$ runs through a complete set of orthonormal eigenvectors associated with eigenvalues $\lambda_{r_1}, 1\leq r_1\leq q$. If there are no edges between vertices of $C_{z,j}$ and $C_{s,t}$, then $A_{zj,kl}=0$.
Otherwise $A_{zj,kl}=A_{st,uv}=\sum_{r_1}\lambda_{r_1}u_{r_1}u^T_{r_1}$.

Note that $Q(G)$ can also be written as
\begin{align} \label{AB}
Q(G)&=\sum_{r_1}
\left[ \begin{array}{ccccc}
           d_1I_n&H^{(\lambda_{r_1})}&H^{(\lambda_{r_1})}&\cdots&H^{(\lambda_{r_1})}   \\
           H^{(\lambda_{r_1})}&d_2I_n&H^{(\lambda_{r_1})}&\cdots&H^{(\lambda_{r_1})}   \\
           H^{(\lambda_{r_1})}&H^{(\lambda_{r_1})}&d_3I_n&\cdots&H^{(\lambda_{r_1})}\\
          \vdots&\vdots&\vdots&\ddots&\vdots   \\
          H^{(\lambda_{r_1})}&H^{(\lambda_{r_1})}&H^{(\lambda_{r_1})}&\cdots&d_mI_n\\
          \end{array}
      \right ]\otimes u_{r_1}u^T_{r_1}\\ \nonumber
&=\sum_{r_1}B^{(1)}\otimes u_{r_1}u^T_{r_1}
\end{align}
 where $B^{(1)}$ denotes the block matrix (the first factor) in the tensor decomposition and
 $H^{(\lambda_{r_1})}=(h^{(\lambda_{r_1})}_{ij})_{n\times n}$ are square matrices of size $n$. Here $h^{(\lambda_{r_1})}_{ij}=0$ or $\lambda_{r_1}$.

We claim that $B^{(1)}$ is a diagonally dominant matrix. Suppose a square matrix $A$ has eigenvalues $\lambda_i$,
the spectral radius $spr(A)$ is defined to be $\max_i |\lambda_i|$. Therefore
\begin{align*}
|\lambda_{r_1}|\leq spr(A_{zj,kl})\leq\parallel A_{zj,kl}\parallel_\infty,
\end{align*}
where $\parallel A_{zj,kl}\parallel_\infty=\max_u \sum^q_{v=1}|a_{z,j,u},_{k,l,v}|$.
Write $B^{(1)}=(b^{(1)}_{st})_{mn\times mn}$, then for each $s$
\begin{align*}
\sum_{t\neq s}|b^{(1)}_{st}|\leq \sum_{k,l}|\lambda_{r_1}|\leq \sum_{k,l} \max_u \sum^q_{v=1}|a_{z,j,u},_{k,l,v}|,
\end{align*}
so $B^{(1)}$ is a diagonally dominant with positive diagonal entries. Hence $B^{(1)}$ is a positive semidefinite matrix.

Since $H^{(\lambda_{r_1})}$ and $d_zI_n$ are commuting symmetric matrices, they can be simultaneously diagonalized.
 If $H^{(\lambda_{r_1})}\ne 0$, one can write that
 \begin{align*}
 H^{(\lambda_{r_1})}=\sum_{r_2}\lambda_{r_2}\ u_{r_2}u^T_{r_2},\ \ \ \
 d_zI_n=d_z\sum_{r_2}u_{r_2}u^T_{r_2},
 \end{align*}
 where $u_{r_2}$ form a complete set of orthonormal eigenvectors of $H^{(\lambda_{r_1})}$ corresponding to eigenvalues $\lambda_{r_2}$, $r_2=1,2,\cdots,n$. Thus we can write that
\begin{eqnarray}
Q{(G)}=\sum_{r_1}\sum_{r_2}
\left[ \begin{array}{ccccc}
           d_1&\lambda_{r_2}&\lambda_{r_2}&\cdots&\lambda_{r_2}  \\
           \lambda_{r_2}&d_2&\lambda_{r_2}&\cdots&\lambda_{r_2}   \\
           \lambda_{r_2}&\lambda_{r_2}&d_3&\cdots&\lambda_{r_2}\\
          \vdots&\vdots&\vdots&\ddots&\vdots   \\
          \lambda_{r_2}&\lambda_{r_2}&\lambda_{r_2}&\cdots&d_m\\
          \end{array}
      \right ]\otimes u_{r_2}u^T_{r_2}\otimes u_{r_1}u^T_{r_1}.
\label{P}
\end{eqnarray}
Note that
$$|\lambda_{r_2}|\leq spr (H^
{(\lambda_{r_1})})\leq\parallel H^{(\lambda_{r_1})}\parallel_\infty=\max_i\sum^n_{j=1}| h^{(\lambda_{r_1})}_{ij}|.$$
 Since $h^{(\lambda_{r_1})}_{ij}=0\ or\ \lambda_{r_1}$, it follows that for $1\leq z\leq m$
 $$d_z\geq\sum_{k\neq z}\sum^n_j|h^{(\lambda_{r_1})}_{ij}|=\sum_{k\neq z}\max_i\sum^n_{j=1}|h^{(\lambda_{r_1})}_{ij}|.$$

Let
\begin{eqnarray}
B^{(2)}=
\left[ \begin{array}{ccccc}
           d_1&\lambda_{r_2}&\lambda_{r_2}&\cdots&\lambda_{r_2}  \\
           \lambda_{r_2}&d_2&\lambda_{r_2}&\cdots&\lambda_{r_2}   \\
           \lambda_{r_2}&\lambda_{r_2}&d_3&\cdots&\lambda_{r_2}\\
          \vdots&\vdots&\vdots&\ddots&\vdots   \\
          \lambda_{r_2}&\lambda_{r_2}&\lambda_{r_2}&\cdots&d_m\\
          \end{array}
      \right ]=(b^{(2)}_{uv})_{m\times m}.
\label{Q}
\end{eqnarray}
Clearly
$$\sum_{1\leq u\neq v\leq m}|b^{(2)}_{uv}|\leq\sum_{k\neq z}|\lambda_{r_2}| \leq\sum_{k\neq z}\max_i \sum^n_{j=1}|h^{(\lambda{r_1})}_{ij}|\leq d_z, $$
which implies that $B^{(2)}$ is a diagonally dominant matrix. Hence $B^{(2)}$ is a positive semidefinite matrix. We have
\begin{align*}
\rho_q(G)=\sum_{r_1,r_2}\frac{B^{(2)}}{\tr(Q(G))}\otimes u_{r_2}u^T_{r_2}\otimes u_{r_1}u^T_{r_1}=\sum_{r_1,r_2}\frac{1}{nq}\frac{B^{(2)}}{d_1+\cdots+d_m}\otimes u_{r_2}u^T_{r_2}\otimes u_{r_1}u^T_{r_1},
\end{align*}
therefore $\frac1{d_1+\cdots +d_m}B^{(2)}, u_{r_2}u^T_{r_2}$ and $u_{r_1}u^T_{r_1}$ are positive semidefinite matrices with unit trace. So $\frac1{d_1+\cdots +d_m}B^{(2)}$, $u_{r_2}u^T_{r_2}$ and $u_{r_1}u^T_{r_1}$ are density matrices. Hence $\rho_q(G)$ is fully separable.
\end{proof}

{\bf \noindent Remark}: By the method of {} [21], one can obtain a criterion for biseparability. In contrast, our criterion
further gives detailed information on how the separable density matrix is expressed as a convex sum of tensor products.
Moreover,  though [21] gives biseparability of $\rho_q(G)$ for the bipartition $AB|C$,
it is unclear if it is fully separable. For example, the method does not
imply if the subsystem $AB$ is separable. \\

The above results can be generalized to multipartite quantum systems. First we extend the notion of partially symmetric graphs to multipartite systems.

{\bf \noindent Definition 11}. A graph $G$ is partially symmetric if $(v_{{i_1},{i_2},\cdots,{i_n}},v_{{j_1},{j_2},\cdots,{j_n}}) \in E(G)$ implies $(v_{{j_1},{i_2},\cdots,{i_n}},v_{{i_1},{j_2},\cdots,{j_n}}) \in E(G)$ $(\forall\ i_1,i_2,\cdots,i_n,j_1,j_2,\cdots,j_n,\ and\ i_1\neq j_1)$.

Note that, GTPT keeps a partial symmetric graph unchanged. If graph $G$ is partially symmetric, then $A^{T}_{{i_1i_2\cdots i_{n-1}},{j_1j_2\cdots j_{n-1}}}=A_{{i_1i_2\cdots i_{n-1}},{j_1j_2\cdots j_{n-1}}}$ for all $i_1,i_2,\cdots,i_{n-1},j_1,j_2,\cdots,j_{n-1}$. Using the similar method as lemma $1$, we see that a partially symmetric graph $G$ is also degree symmetric.

Now we consider a class of multipartite fully separable state of dimension $N_1N_2\cdots N_n$ arising from partially symmetric graphs.

\begin{thm}
Let $G$ be a partially symmetric graph with the following properties.

\noindent(1) Between two vertices of any partition $C_{i_1}$, there is no edge, $(v_{i_1,i_2,\cdots,i_n},v_{i_1,j_2,\cdots,j_n})\notin E(G)$, for all $i_1,i_2,\cdots,i_n,\ j_2,\cdots,j_n$.

\noindent(2) For each layer, the following conditions hold:
\begin{itemize}
 \item Either there is no edge between vertices of $C_{i_1}$ and $C_{j_1}$, or $A_{i_1,j_1}=A_{k_1,l_1}$ for all $i_1,k_1,j_1,l_1,i_1\ne  j_1\ and\ k_1\ne l_1$;

 \item Either there is no edge between vertices of $C_{i_1,i_2}$ and $C_{j_1,j_2}$, or $A_{i_1i_2,j_1j_2}=A_{k_1k_2,l_1l_2}$ for all $i_1,i_2,k_1,k_2,j_1,j_2,l_1,1_2,i_1\ne j_1\ or\ i_2\ne j_2$, and $k_1\ne l_1\ or\ k_2\ne l_2$;


 \item Either there is no edge between vertices of $C_{i_1,i_2,\cdots,i_{n-1}}$ and $C_{j_1,j_2,\cdots,j_{n-1}}$, or $A_{i_1i_2\cdots i_{n-1},j_1j_2\cdots j_{n-1}}$=\\$A_{k_1k_2\cdots k_{n-1},l_1l_2\cdots l_{n-1}}$ for all $i_1,i_2,\cdots,i_{n-1},k_1,k_2,\cdots,k_{n-1}, j_1,j_2,\cdots,j_{n-1},l_1,l_2,\cdots,l_{n-1}$, and there exist $g,h \in$ $\{1,2,\cdots,n $\} such that $i_g\neq j_g$ and $k_h\neq l_h$.
\end{itemize}
\noindent(3) Degree of all the vertices in a layer are the same.\\
Then $\rho_q(G)$ is fully separable.
\label{(6)}
\end{thm}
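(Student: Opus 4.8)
The plan is to mirror the tripartite argument of Theorem~\ref{(5)} step by step, peeling off one tensor factor per layer level via an inductive spectral decomposition. First I would write down $Q(G) = D(G) + A(G)$ in the fully refined block form, using condition~(1) to kill all diagonal blocks $A_{i_1i_2\cdots i_{n-1},i_1i_2\cdots i_{n-1}}$ within a single layer $C_{i_1}$, so that the ``innermost'' $N_n\times N_n$ blocks are either $0$ or, by the last bullet of condition~(2), all equal to one fixed symmetric matrix, call it $A_{(n-1)}$. Take its spectral decomposition $A_{(n-1)} = \sum_{r_1}\lambda_{r_1} u_{r_1} u_{r_1}^T$ with orthonormal eigenvectors; substituting this into every nonzero innermost block and regrouping gives $Q(G) = \sum_{r_1} B^{(1)} \otimes u_{r_1}u_{r_1}^T$, where $B^{(1)}$ is a matrix of size $N_1N_2\cdots N_{n-1}$ whose off-diagonal entries are $0$ or $\lambda_{r_1}$ and whose diagonal entries are the common layer-degrees $d_{i_1}$ (here condition~(3) is what makes the diagonal blocks scalar multiples of the identity, hence the tensor factoring legal).

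Next I would iterate: at stage $t$, $B^{(t)}$ is a symmetric matrix of size $N_1\cdots N_{n-t}$ that decomposes into blocks of size $N_{n-t}$, each of which is either $0$ or a common symmetric matrix (this uses the $t$-th bullet of condition~(2) together with partial symmetry, which forces the relevant sub-adjacency blocks to be symmetric so that simultaneous diagonalization is available), and whose ``diagonal blocks'' are scalar multiples of $I_{N_{n-t}}$ by condition~(3). Because $B^{(t)}$'s off-diagonal blocks commute with the scalar diagonal blocks, one spectral decomposition of the common block produces $B^{(t)} = \sum_{r_{t+1}} B^{(t+1)} \otimes u_{r_{t+1}} u_{r_{t+1}}^T$. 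After $n-1$ steps we arrive at
\begin{align*}
Q(G) = \sum_{r_1,\dots,r_{n-1}} B^{(n-1)} \otimes u_{r_{n-1}}u_{r_{n-1}}^T \otimes \cdots \otimes u_{r_1}u_{r_1}^T,
\end{align*}
where $B^{(n-1)}$ is the $N_1\times N_1$ matrix with diagonal entries $d_1,\dots,d_{N_1}$ and all off-diagonal entries equal to a single eigenvalue $\lambda_{r_{n-1}}$.

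The positivity bookkeeping is the same chain of inequalities used in Theorem~\ref{(5)}: at each stage $|\lambda_{r_{t+1}}| \le \mathrm{spr}(\text{common block}) \le \|\cdot\|_\infty$, and this $\|\cdot\|_\infty$ bound is exactly what condition~(3) (equal degrees in each layer) guarantees dominates the row sums of $B^{(t+1)}$, so every $B^{(t)}$ — and in particular $B^{(n-1)}$ — is diagonally dominant with nonnegative diagonal, hence positive semidefinite. Dividing by $\tr(Q(G))$, which factors as $(N_2N_3\cdots N_n)(d_1 + \cdots + d_{N_1})$ so that $\frac{1}{d_1+\cdots+d_{N_1}}B^{(n-1)}$ has unit trace, exhibits $\rho_q(G)$ as a convex combination of tensor products of the density matrices $\frac{1}{d_1+\cdots+d_{N_1}}B^{(n-1)}$ and the rank-one projections $u_{r_t}u_{r_t}^T$, proving full separability.

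I expect the main obstacle to be purely notational/organizational rather than conceptual: setting up a clean induction over the $n-1$ peeling stages so that the hypotheses ``the nonzero blocks at level $t$ are all equal'' and ``the diagonal blocks at level $t$ are scalar'' are correctly propagated from the stated conditions on $G$ to the derived matrices $B^{(t)}$. One has to verify carefully that the common-block equality at level $t$ (second-to-last bullet chain of condition~(2)) survives the substitution of spectral decompositions from the previous stages — i.e. that the eigenvectors $u_{r_j}$ chosen at earlier stages can be taken uniformly across all blocks, which is legitimate precisely because those blocks were identical before the substitution. Once that inductive invariant is stated precisely, each individual step is a routine repetition of the tripartite computation, and the diagonal-dominance estimates go through verbatim.
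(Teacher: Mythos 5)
Your proposal follows essentially the same route as the paper's own proof: an iterated spectral ``peeling'' of one tensor factor per layer level, writing $Q(G)=\sum_{r_1,\dots,r_{n-1}}B^{(n-1)}\otimes u_{r_{n-1}}u_{r_{n-1}}^T\otimes\cdots\otimes u_{r_1}u_{r_1}^T$, with conditions (1)--(2) guaranteeing the common blocks and condition (3) giving diagonal dominance, hence positive semidefiniteness, at every stage. The normalization and the conclusion of full separability are also identical to the paper's argument, so the proposal is correct and matches the published proof.
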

\begin{proof}
Let
\begin{eqnarray}
C^{(z)}_t=
\left[ \begin{array}{ccccc}
           d_tI_{N_{n-z}}&0&0&\cdots&0   \\
           0&d_tI_{N_{n-z}}&0&\cdots&0   \\
           0&0&d_tI_{N_{n-z}}&\cdots&0\\
          \vdots&\vdots&\vdots&\ddots&\vdots   \\
          0&0&0&\cdots&d_tI_{N_{n-z}}\\
          \end{array}
      \right ],
\label{R}
\end{eqnarray}
\begin{eqnarray}
D^{(z)}=
\left[ \begin{array}{ccccc}
           H^{(\lambda_{r_z})}&H^{(\lambda_{r_z})}&H^{(\lambda_{r_z})}&\cdots&H^{(\lambda_{r_z})}   \\
           H^{(\lambda_{r_z})}&H^{(\lambda_{r_z})}&H^{(\lambda_{r_z})}&\cdots&H^{(\lambda_{r_z})}   \\
           H^{(\lambda_{r_z})}&H^{(\lambda_{r_z})}&H^{(\lambda_{r_z})}&\cdots&H^{(\lambda_{r_z})}\\
          \vdots&\vdots&\vdots&\ddots&\vdots   \\
          H^{(\lambda_{r_z})}&H^{(\lambda_{r_z})}&H^{(\lambda_{r_z})}&\cdots&H^{(\lambda_{r_z})}\\
          \end{array}
      \right ],
\label{S}
\end{eqnarray}
\begin{eqnarray}
H^{(\lambda_{r_z})}=
\left[ \begin{array}{cccc}
           \lambda_{r_z}&\lambda_{r_z}&\cdots&\lambda_{r_z}   \\
           \lambda_{r_z}&\lambda_{r_z}&\cdots&\lambda_{r_z}\\
          \vdots&\vdots&\ddots\vdots \\
          \lambda_{r_z}&\lambda_{r_z}&\cdots&\lambda_{r_z}   \\
          \end{array}
      \right ]=(h^{(\lambda_{r_z})}_{ij})_{N_{n-z}\times N_{n-z}},
\label{T}
\end{eqnarray}
and
\begin{eqnarray}
B^{(z')}=
\left[ \begin{array}{ccccc}
           C^{(z')}_1&D^{(z')}&D^{(z')}&\cdots&D^{(z')}   \\
           D^{(z')}&C^{(z')}_2&D^{(z')}&\cdots&D^{(z')}   \\
           D^{(z')}&D^{(z')}&C^{(z')}_3&\cdots&D^{(z')}\\
          \vdots&\vdots&\vdots&\ddots&\vdots   \\
         D^{(z')}&D^{(z')}&D^{(z')}&\cdots&C^{(z')}_{N_1}\\
          \end{array}
      \right ] ,
\label{U}
\end{eqnarray}
where $B^{(z')}$ is a square matrix of size $ N_1N_2\cdots N_{n-z'}$, $z=1,2,\cdots,n-3, z'=1,2,\cdots,n-1$.

Let $A_{i_1i_2\cdots i_{n-1},j_1j_2\cdots j_{n-1}}=(a_{{i_1,i_2,\cdots, i_n},{j_1,j_2,\cdots, j_n}})_{N_n\times N_n}$ be a symmetric matrix, and suppose its spectral decomposition is given by
$$A_{i_1i_2\cdots i_{n-1},j_1j_2\cdots j_{n-1}}=\sum_{r_1}\lambda_{r_1}u_{r_1}u^T_{r_1},(r_1=1,2,\cdots,N_n)£¬$$
where $u_{r_1}$ form a complete set of orthonormal eigenvectors corresponding to the eigenvalues $\lambda_{r_1}$. If there is no edge between vertices of $C_{i_1i_2\cdots i_{n-1}}$ and $C_{j_1j_2\cdots j_{n-1}}$, then $A_{i_1i_2\cdots i_{n-1},j_1j_2\cdots j_{n-1}}=\sum_{r_1}0u_{r_1}u^T_{r_1}$. Subsequently $A_{i_1i_2\cdots i_{n-1},j_1j_2\cdots j_{n-1}}=A_{k_1k_2\cdots k_{n-1},l_1l_2\cdots l_{n-1}}=\sum_{r_1}\lambda_{r_1}u_{r_1}u^T_{r_1}$.

$Q(G)$ can also be written as
\begin{eqnarray}
Q(G)=\sum_{r_1}
\left[ \begin{array}{ccccc}
           C^{(1)}_1&D^{(1)}&D^{(1)}&\cdots&D^{(1)}   \\
           D^{(1)}&C^{(1)}_2&D^{(1)}&\cdots&D^{(1)}   \\
           D^{(1)}&D^{(1)}&C^{(1)}_3&\cdots&D^{(1)}\\
          \vdots&\vdots&\vdots&\ddots&\vdots   \\
         D^{(1)}&D^{(1)}&D^{(1)}&\cdots&C^{(1)}_{N_1}\\
          \end{array}
      \right ]\otimes u_{r_1}u^T_{r_1},
\label{V}
\end{eqnarray}
where $C^{(1)}_t (t=1,2,\cdots,N_1)$ and $D^{(1)}$ are squre matrices of size $N_2N_3\cdots N_{n-1}$. Note that
$H^{(\lambda_{r_1})}$ is a square matrix
of size $N_{n-1}$, and $h^{(\lambda_{r_1})}_{ij}=0\ or\ \lambda_{r_1}$.

Let $B^{(1)}=(b^{(1)}_{uv})$. Using the similar way of Theorem 3, we get that
$$\sum_{u\neq v}|b^{(1)}_{uv}|\leq \sum_{j_1,j_2,\cdots,j_{n-1}}|\lambda_{r_1}|\leq \sum_{j_1,j_2,\cdots,j_{n-1}} \max_{i_n} \sum^{N_n}_{j_n=1}|a_{i_1,i_2,\cdots,i_n},_{j_1,j_2,\cdots,j_n}|, $$
so $B^{(1)}$ is a diagonally dominant matrix. Hence $B^{(1)}$ is a positive semidefinite matrix.

 Since $H^{(\lambda_{r_1})}$ and $d_tI_{N_{n-1}}$ are commuting matrices, there exists a common set of eigenvectors such that
 \begin{align*}
 H^{(\lambda_{r_2})}&=\sum_{r_2}\lambda_{r_2}\ u_{r_2}u^T_{r_2}\\
 d_tI_{N_{n-1}}&=d_t\sum_{r_2}u_{r_2}u^T_{r_2}
 \end{align*}
 then
\begin{eqnarray}
Q{(G)}=\sum_{r_1}\sum_{r_2}
\left[ \begin{array}{ccccc}
           C^{(2)}_1&D^{(2)}&D^{(2)}&\cdots&D^{(2)}   \\
           D^{(2)}&C^{(2)}_2&D^{(2)}&\cdots&D^{(2)}   \\
           D^{(2)}&D^{(2)}&C^{(2)}_3&\cdots&D^{(2)}\\
          \vdots&\vdots&\vdots&\ddots&\vdots   \\
         D^{(2)}&D^{(2)}&D^{(2)}&\cdots&C^{(2)}_{N_1}\\
          \end{array}
      \right ]\otimes u_{r_2}u^T_{r_2}\otimes u_{r_1}u^T_{r_1},
\label{W}
\end{eqnarray}
where $C^{(2)}_t (t=1,2,\cdots,N_1)$ and $D^{(2)}$ are matrices of size $N_2N_3\cdots N_{n-2}$. Note that
$H^{(\lambda_{r_2})}$ is a matrix of size $N_{n-2}$, and $h^{(\lambda_{r_1})}_{ij}=0\ or\ \lambda_{r_1}$.

Next we show that $B^{(2)}$ is a diagonally dominant matrix. In fact,
$$|\lambda_{r_2}|\leq spr H^
{(\lambda_{r_1})}\leq\parallel H^{(\lambda_{r_1})}\parallel_\infty=\max_i\sum^{N_{n-1}}_{j=1}| h^{(\lambda_{r_1})}_{ij}|.$$
Since $h^{(\lambda_{r_1})}_{ij}=0\ or\ \lambda_{r_1}$,
$$d_t\geq\sum_{i_1,i_2,\cdots,i_{n-1}\neq t}\sum^{N_{n-1}}_j|h^{(\lambda_{r_1})}_{ij}|=\sum_{i_1,i_2,\cdots,i_{n-1}\neq t}\max_i\sum^{N_{n-1}}_{j=1}|h^{(\lambda_{r_1})}_{ij}|.$$
Write $B^{(2)}=(b^{(2)}_{uv})$, where $u, v=1,\ldots, N_1N_2\cdots N_{n-2}$, then for $t=1,2,\cdots,N_1$
$$\sum_{u\neq v}|b^{(2)}_{uv}|\leq\sum_{i_1,i_2,\cdots,i_{n-1}\neq t}|\lambda_{r_2}| \leq\sum_{i_1,i_2,\cdots,i_{n-1}\neq t}\max_i \sum^{N_{n-1}}_{j=1}|h^{(\lambda{r_1})}_{ij}|\leq d_t .$$
Therefore $B^{(2)}$ is a diagonally dominant matrix, and subsequently a positive semi-definite matrix.

Thus we can write that
\begin{eqnarray}
Q{(G)}=\sum_{r_1,\ldots, r_{n-2}}
\left[ \begin{array}{ccccc}
           C^{(n-2)}_1&D^{(n-2)}&D^{(n-2)}&\cdots&D^{(n-2)}   \\
           D^{(n-2)}&C^{(n-2)}_2&D^{(n-2)}&\cdots&D^{(n-2)}   \\
           D^{(n-2)}&D^{(n-2)}&C^{(n-2)}_3&\cdots&D^{(n-2)}\\
          \vdots&\vdots&\vdots&\ddots&\vdots   \\
         D^{(n-2)}&D^{(n-2)}&D^{(n-2)}&\cdots&C^{(n-2)}_{N_1}\\
          \end{array}
      \right ]\otimes u_{r_{n-2}}u^T_{r_{n-2}}\otimes\cdots \otimes u_{r_1}u^T_{r_1},
\label{X}
\end{eqnarray}
where
\begin{eqnarray}
C^{(n-2)}_t=
\left[ \begin{array}{cccc}
           d_t&0&\cdots&0   \\
           0&d_t&\cdots&0   \\
          \vdots&\vdots&\ddots&\vdots   \\
          0&0&\cdots&d_t\\
          \end{array}
      \right ],
\label{Y}
\end{eqnarray}

\begin{eqnarray}
D^{(n-2)}=H^{(\lambda_{r_{n-2}})}=
\left[ \begin{array}{cccc}
           \lambda_{r_{n-2}}&\lambda_{r_{n-2}}&\cdots&\lambda_{r_{n-2}}   \\
           \lambda_{r_{n-2}}&\lambda_{r_{n-2}}&\cdots&\lambda_{r_{n-2}}\\
          \vdots&\vdots&\ddots&\vdots \\
          \lambda_{r_{n-2}}&\lambda_{r_{n-2}}&\cdots&\lambda_{r_{n-2}}   \\
          \end{array}
          \right ]=(h^{(\lambda_{r_{n-2}})}_{ij})_{N_2\times N_2},
\label{Z}
\end{eqnarray}
where $C^{(n-2)}_t (t=1,2,\cdots,N_1)$, $D^{(n-2)}$ and $H^{(\lambda_{r_{n-2}})}$ are square matrices of size $N_2$. Here $h^{(\lambda_{r_{n-2}})}_{ij}=0\ or\ \lambda_{r_{n-2}}$.

Since $H^{(\lambda_{r_{n-2}})}$ and $d_tI_{N_2}$ are commuting symmetric matrices, they have common eigenvectors: $u_{r_{n-1}}$.
Therefore we can write that
\begin{eqnarray}
Q{(G)}=\sum_{r_1}\cdots \sum_{r_{n-1}}
\left[ \begin{array}{ccccc}
           d_1&\lambda_{r_{n-1}}&\lambda_{r_{n-1}}&\cdots&\lambda_{r_{n-1}}  \\
           \lambda_{r_{n-1}}&d_2&\lambda_{r_{n-1}}&\cdots&\lambda_{r_{n-1}}   \\
           \lambda_{r_{n-1}}&\lambda_{r_{n-1}}&d_3&\cdots&\lambda_{r_{n-1}}\\
          \vdots&\vdots&\vdots&\ddots&\vdots   \\
          \lambda_{r_{n-1}}&\lambda_{r_{n-1}}&\lambda_{r_2}&\cdots&d_{N_1}\\
          \end{array}
      \right ]\otimes u_{r_{n-1}}u^T_{r_{n-1}}\cdots \otimes u_{r_1}u^T_{r_1},
\label{AA}
\end{eqnarray}
Let $B^{(n-1)}=(b^{(n-1)}_{uv})_{N_1\times N_1}$, then
$$\sum_{u\neq v}|b^{(n-1)}_{uv}|\leq \sum_{i_1\neq t}|\lambda_{r_{n-1}}|\leq\sum_{i_1\neq t}\sum^{N_2}_{j=1}|h^{\lambda_{r_{n-2}}}_{ij}|=\sum_{i_1\neq t}\max_i\sum^{N_2}_{j=1}|h^{\lambda_{r_{n-2}}}_{ij}|\leq d_t.$$
So $B^{(n-1)}$ is a diagonally dominant matrix, thus positive semidefinite matrix.
Putting all these together, we have that for $r_k=N_{n-k+1},k=1,2,\cdots,n-1$
\begin{align*}
\rho_q(G)&=\frac{\sum_{r_1,\cdots,r_{n-1}}B^{(n-1)}\otimes u_{r_{n-1}}u^T_{r_{n-1}}\otimes\cdots \otimes u_{r_1}u^T_{r_1}}{\tr(Q(G))}\\
&=\sum_{r_1,\cdots,r_{n-1}}\frac{B^{(n-1)}}{\tr(Q(G))}\otimes u_{r_{n-1}}u^T_{r_{n-1}}\otimes\cdots\otimes u_{r_1}u^T_{r_1}\\
&=\sum_{r_1,\cdots,r_{n-1}}\frac{1}{N_2\cdots N_n}\frac{B^{(n-1)}}{d_1+d_2+\cdots+d_{N_1}}\otimes u_{r_{n-1}}u^T_{r_{n-1}}\otimes\cdots \otimes u_{r_1}u^T_{r_1}
\end{align*}
therefore $\frac{B^{(n-1)}}{d_1+d_2+\cdots+d_{N_1}}, u_{r_2}u^T_{r_2},\cdots,\ u_{r_1}u^T_{r_1}$
are positive semidefinite matrices with unit trace. So $\frac{B^{(n-1)}}{d_1+d_2+\cdots+d_{N_1}}, u_{r_2}u^T_{r_2},\cdots, u_{r_1}u^T_{r_1}$ are density matrices. Hence $\rho_q(G)$ is fully separable.
\end{proof}

\section{ Conclusion}

We have studied the separability for multipartite quantum states arising from simple graphs. With the properties of degree symmetric graphs, we have proved that separability of $\rho_l(G)$ implies that of $\rho_l(G')$ for the degree symmetric graph $G$ in the tripartite systems. And these results have been generalized to multipartite systems. Furthermore, we have studied the properties of partially symmetric graphs and proved that every partially symmetric graph $G$ is degree symmetric. We have presented a new layers method and provided classes of tripartite and multipartite fully separable states arising from partially symmetric graphs. These results are useful to distinguish separable states. It is hoped that
this work may help understand the physical characteristics and mathematical structures of (graph) separable states.

\noindent\textbf{Acknowledgments} This work is supported by the National Natural Science Foundation of China under grant Nos. 11101017,
11531004 and 11726016 and Simons Foundation grant No. 523868. 

\vspace*{2mm}
\begin{center}\raggedright{\bf References}\end{center}
\par\parindent 0.5em\hangafter 1 \hangindent 1.9em
{\bf 1.} Ekert, A.K.: Quantum cryptogaphy based on Bell's theorem. Phys. Rev. Lett. {\bf 67}(6), 661-663 (1991).

\par\parindent 0.5em\hangafter 1 \hangindent 1.9em
{\bf 2.} Bennett, C.H., Wiesner, S.J.: Communication via one- and two-particle operators on Einstein-Podolsky-Rosen states. Phys. Rev. Lett. {\bf 69}(20), 2881-2884 (1992).

\par\parindent 0.5em\hangafter 1 \hangindent 1.9em
{\bf 3.} Bennett, C.H., Brassard, G., Cr\'epeau, C., Jozsa, R., Peres, A.,Wootters, W.K.: Teleporting an unknown quantum state via dual classical and Einstein-Podolsky-Rosen channels. Phys. Rev. Lett. {\bf 70}(13), 1895-1899 (1993).

\par\parindent 0.5em\hangafter 1 \hangindent 1.9em
{\bf 4.} Bapat, R.B.: Graphs and matrices. Springer-Verlag London, (2010).

\par\parindent 0.5em\hangafter 1 \hangindent 1.9em
{\bf 5.} Braunstein, S.L.£¬Ghosh, S., Severini, S.: The Laplacian of a graph as a density matrix: a basic combinatorial approach to separability of mixed states. Ann. Combin. {\bf 10}(3), 291-317 (2006).  

\par\parindent 0.5em\hangafter 1 \hangindent 1.9em
{\bf 6.} Hein, M., Eisert, J., Briegel, H.J.: Multiparty entanglement in graph states. Phys. Rev. A. {\bf 69}, 062311 (2004). 

\par\parindent 0.5em\hangafter 1 \hangindent 1.9em
{\bf 7.} Anders, S., Briegel, H.J.: Fast simulation of stabilizer circuits using a graph-state representation. Phys. Rev. A. {\bf 73}, 022334 (2006).

\par\parindent 0.5em\hangafter 1 \hangindent 1.9em
{\bf 8.} Benjamin, S.C., Browne, D.E., Fitzsimons, J., Morton, J.J.L.: Brokered graph-state quantum computation. New Journal of Physics {\bf 8}, 141 (2006).

\par\parindent 0.5em\hangafter 1 \hangindent 1.9em
{\bf 9.} Singh, S.K., Pal, S.P.,  Kumar, S., Srikanth, R.: A combinatorial approach for studying local operations and classical communication transformations of multipartite states. J. Math. Phys. {\bf 46}, 122105 (2005).

\par\parindent 0.5em\hangafter 1 \hangindent 1.9em
{\bf 10.} Dutta, S., {Adhikari, B.,} Banerjee, S.: A graph theoretical approach to states and unitary operations. Quantum Inf. Process. {\bf 15}(5), 2193-2212 (2016). 

\par\parindent 0.5em\hangafter 1 \hangindent 1.9em
{\bf 11.} Wu, C. W.: 
Conditions for separability in generlized Laplacian matrices and diagonally dominant matrices as density matrices. Phys. Lett. A {\bf 351}(1), 18-22 (2006)

\par\parindent 0.5em\hangafter 1 \hangindent 1.9em
{\bf 12.}  Wu, C. W.: Multipartite separability of Laplacian matrices of graphs. {The electronic journal of combinatorics} {\bf 16}(1), R61 (2009).

\par\parindent 0.5em\hangafter 1 \hangindent 1.9em
{\bf 13.}  Wu, C. W.: On graphs whose Laplacian matrix's multipartite separability is invariant under graph isomorphism. Discrete Math. {\bf 310}(21), 2811-2814 (2010).

\par\parindent 0.5em\hangafter 1 \hangindent 1.9em
{\bf 14.}  Wu, C. W.: Graphs whose normalized Laplacian matrices are separable as density matrices in quantum mechanics. Discrete Math. {\bf 339}(4), 1377-1381 (2016).

\par\parindent 0.5em\hangafter 1 \hangindent 1.9em
{\bf 15.} Zhao, H., Fan, J.: Separability of generalized graph product states. Chin. Phy. Lett {\bf 30}(9), 090303 (2013).

\par\parindent 0.5em\hangafter 1 \hangindent 1.9em
{\bf 16.} Hildebrand, R., Mancini, S., Severini, S.:
Combinatorial laplacians and positivity under partial transpose. Math. Struct. in Comp. Sci. {\bf 18}(1), 205-219 (2008).

\par\parindent 0.5em\hangafter 1 \hangindent 1.9em
{\bf 17.} Braunstein, S.L., Ghosh, S., Mansour, T., Severini, S., Wilson, R. C.:
Some families of density matrices for which separability is easily tested. Phys. Rev. A. {\bf 73}, 012320 (2006)

\par\parindent 0.5em\hangafter 1 \hangindent 1.9em
{\bf 18.} Xie, C., Zhao, H., Wang, Z.X.: Separability of density matrices of graphs for multipartite systems. Electr. J. Combin. {\bf 20}(4), P21 (2013).

\par\parindent 0.5em\hangafter 1 \hangindent 1.9em
{\bf 19.} Dutta, S., Adhikari, B., Banerjee, S., Srikanth, R.: Bipartite separability and nonlocal quantum operations on graphs. Phys. Rev. A {\bf 94}, 012306 (2016).

\par\parindent 0.5em\hangafter 1 \hangindent 1.9em
{\bf 20.} Ha, K.-C., Kye, S.-H.: Construction of three-qubit genuine entanglement with bipartite positive partial transposes. Phys. Rev. A {\bf 93}, 032315 (2016).

\par\parindent 0.5em\hangafter 1 \hangindent 1.9em 
{\bf 21.} Ha, K.-C.: Sufficient criterion for separability of bipartite states. Phys. Rev. A {\bf 82}, 032315 (2010).

\end{document}